\numberwithin{table}{section}
\numberwithin{equation}{section}
\theoremstyle{plain}
\newtheorem{theorem}{Theorem}[section]
\newtheorem{definition}[theorem]{Definition}
\newtheorem{example}[theorem]{Example}
\newtheorem{remark}[theorem]{Remark}
\author[1]{ \textbf{Bryan S. Hernandez}}
\author[2]{ \textbf{Patrick Vincent N. Lubenia}}
\author[2,3,4]{\textbf{Eduardo R. Mendoza}}
\affil[1]{\small \textit{Institute of Mathematics, University of the Philippines Diliman, Quezon City 1101, Philippines}}
\affil[2]{\small \textit{Systems and Computational Biology Research Unit, Center for Natural Sciences and Environmental Research, Manila 0922, Philippines}}
\affil[3]{\small \textit{Mathematics and Statistics Department, De La Salle University, Manila  0922, Philippines}}
\affil[4]{\small \textit{Max Planck Institute of Biochemistry, Martinsried near Munich 82152, Germany}}
\affil[*]{Email addresses: \texttt{bshernandez@up.edu.ph},
\texttt{pnlubenia@upd.edu.ph}, \texttt{eduardo.mendoza@dlsu.edu.ph}}
\title{\textbf{Embedding-based comparison of reaction networks of Wnt signaling}}
\date{}
\begin{document}
\maketitle
\begin{abstract} 
This work introduces a new method for comparing two reaction networks of the same or closely related systems through their embedded networks in terms of the shared set of species.  Hence, we call this method the Common Species Embedded Networks (CSEN) analysis. Using this approach, we conduct a comparison of existing reaction networks associated with Wnt signaling models (Lee, Schmitz, MacLean, and Feinberg) that we have identified.
The analysis yields three important results for these Wnt models. First, the CSEN analysis of the Lee (mono-stationary) and Feinberg (multi-stationary) shows a strong similarity, justifying the study of the Feinberg model, which was a modified Lee model constructed to study an important network property called ``concordance''.
It also challenge the absoluteness of discrimination of the models into mono-stationarity versus multi-stationarity, which is a main result of Maclean et al. (PNAS USA 2015).
Second, the CSEN analysis provides evidence supporting a strong similarity between the Schmitz and MacLean models, as indicated by the ``proximate equivalence'' that we have identified. Third, the analysis underscores the absence of a comparable relationship between the Feinberg and MacLean models, highlighting distinctive differences between the two. Thus, our approach could be a useful tool to compare mathematical models of the same or closely related systems.

\end{abstract}

\thispagestyle{empty}
\section{Introduction}
\label{sec:1}

Reaction networks have recently been used to perform parameter-free analyses of mathematical models individually and comparatively \cite{NFM2023,LUML2023,LUML2021}. The structure of the network's independent decompositions \cite{FeinbergBook,had2022,hernandez:delacruz1}, equilibria parametrizations \cite{NFM2019,HernandezetalPCOMP2023}, and a standard toolbox for reaction networks \cite{FeinbergToolbox} are some of the important tools that are used to analyze the models.
The crucial properties studied include multi-stationarity, absolute concentration robustness (ACR), and concordance. Multi-stationarity is the capacity for a network to admit two or more positive equilibria
with the same conserved quantities
\cite{dickenstein2019multistationarity}. On the other hand, a system exhibits ACR in a particular species if the equilibrium value for a particular species has the same value at any equilibrium regardless of any set of initial conditions \cite{SHFE2010}. Finally, concordance property is a structural feature of a network that enforce duller and more restrictive behavior despite what might be great intricacy in the interplay of many species \cite{SHFE2012}.

Embedded networks of reaction networks were first introduced in 2013 by Joshi and Shiu in \cite{JOSH2013} and were used to study continuous flow stirred tank reactors (CFSTR). Embedded networks are also the basis for embedded representations of Biochemical Systems Theory (BST) systems \cite{AJLM2017}. In particular, Farinas et al. \cite{FAML2021} used embedded representations to study a model of Mycobacterium tuberculosis growth. More recently, Meshkat et al. \cite{MEST2022} applied the concept to derive a necessary and sufficient condition to determine whether a reaction network has ACR property for a class of mass action systems.

In this work, we analyze the structural and kinetic relationships of reaction networks through their embedded networks relative to their set of common species. This was motivated by our analysis of Wnt signaling models in biochemistry. Wnt signaling is a vital mechanism that regulates crucial development processes and maintenance of tissue homeostasis.

Wnt signaling pathway have been analyzed via reaction networks by MacLean et al. \cite{Maclean} primarily to investigate bistability, i.e., existence of two stable positive equilibria for a particular set of rate constants. They compared the previous models of Lee \cite{Lee} and Schmitz \cite{Schmitz} with their proposed model \cite{Maclean}, which we call MacLean model in this paper. The species and reaction networks of the models are provided in the Appendices \ref{app:var} and \ref{reaction:networks}, respectively.

The Lee model focuses on the formation of the destruction complex from its constituent parts and how its subsequent ability to degrade $\beta$-catenin is altered by the presence and absence of an external Wnt stimulus. The model assumes that all species are uniformly distributed throughout the cell and hence does not distinguish between the nucleus and the cytoplasm.
The Schmitz model focuses on the effect that shuttling of $\beta$-catenin and destruction complex between the cytoplasm and the nucleus has on T-cell factor binding to $\beta$-catenin in the nucleus.
The MacLean model focuses on both $\beta$-catenin degradation and protein shuttling between the cytoplasm and nucleus that may serve as a possible mechanism for governing bistability in the pathway \cite{Maclean}.

{Feinberg \cite{FeinbergBook}  modified the Lee model by considering a complex of three species into a single species, making the reaction containing this complex a reversible one, and eliminating a single species. In this paper, we call this network the Feinberg network. When the pair of reversible reaction that does not contribute much to the network except to produce
a species that does not play any other role in the Wnt signaling network is removed, the concordance property is achieved from an originally discordant network.
}

The analysis yields three important results for the Wnt signaling models. First, the CSEN analysis of the Lee, which is mono-stationary, and Feinberg, which is multi-stationary, shows a strong similarity. This justifies the study of the Feinberg model, which was a slightly modified Lee model.
Furthermore, it challenges the absoluteness of discrimination of the models into mono-stationarity versus multi-stationarity.
Second, the CSEN analysis suggests a strong similarity between the Schmitz and MacLean models. This is indicated by the ``proximate equivalence'' that we have identified. Third, the analysis underscores the absence of a comparable relationship between the Feinberg and MacLean models. This highlights the distinctive differences between these two models.

The rest of the paper is organized as follows: Section \ref{chap:preliminaries} provides the background necessary to understand the flow of this paper.
Section \ref{CSEN:method} outlines the new method of comparing reaction networks via common species embedded networks.
Sections \ref{sec:comparisonLeeFAL}, \ref{section:Schmitz:MacLean}, and \ref{section:comp:FAL:MacLean} provide our CSEN analyses for the Lee and Feinberg models, the Schmitz and MacLean models, and the Feinberg and MacLean models, respectively. Finally, a summary and recommendation are given in Section \ref{sec:summary}.

\section{Preliminaries}
\label{chap:preliminaries}


\subsection{Chemical reaction networks and systems}

We start by introducing a formal definition of a chemical reaction network (CRN).

\begin{definition}
    A \emph{chemical reaction network}, denoted by $\mathscr{N}$, 
    is a triple of nonempty and finite sets $(\mathscr{S}, \mathscr{C}, \mathscr{R})$ where
    \begin{itemize}
        \item[a.] $\mathscr{S} = \left\{A_1, A_2, \ldots, A_m \right\}$ is the set of $m$ \emph{species},
        \item[b.] $\mathscr{C} = \{C_1, C_2, \ldots, C_n\}$ is the set of $n$ \emph{complexes}, which are non-negative linear combinations of the species, and
        \item[c.] $\mathscr{R} = \{R_1, R_2, \ldots, R_r\} \subset \mathscr{C} \times \mathscr{C}$ is the set of $r$ \emph{reactions}.
    \end{itemize}
    We usually denote a reaction $(y,y')$ using $y \to y'$. In this reaction, $y$ is called a \emph{reactant complex} and $y'$ is called a \emph{product complex}. {An inflow reaction, denoted by $0 \to A_i$, means a constant supply of species $A_i$. On the other hand, an outflow reaction, denoted by $A_j \to 0$, means degradation of species $A_j$.}
    
    A \emph{reaction vector} of the reaction is the difference $y'-y$. Hence, a reaction vector is a linear combination of species.
    The linear subspace of $\mathbb{R}^m$ spanned by the reaction vectors is called the \emph{stoichiometric subspace} of $\mathscr{N}$, defined as $S = \mathrm{span}\{y' - y \in \mathbb{R}^m \mid y \rightarrow y' \in \mathscr{R}\}$.
    The $m \times r$ matrix, where the $i$th column contains the coefficients of the associated species in the $i$th reaction vector associated with the $i$th reaction, is the \emph{stoichiometric matrix} of the network.
\end{definition}

To describe the dynamics of the temporal evolution of the concentrations of the species, a CRN is endowed with kinetics. Kinetics is defined as follows.

\begin{definition}
A \emph{kinetics} for a reaction network $\mathscr{N}=(\mathscr{S}, \mathscr{C}, \mathscr{R})$ is an assignment to
each reaction $y \to y' \in \mathscr{R}$ of a continuously differentiable {rate function} $K_{y\to y'}: \mathbb{R}^\mathscr{S}_{{\geq} 0} \to \mathbb{R}_{\ge 0}$ such that the following positivity condition holds:
$K_{y\to y'}(c) > 0$ if and only if ${\rm{supp \ }} y \subset {\rm{supp \ }} c$, where ${\rm{supp \ }} y$ refers to the support of the vector $y$, i.e., the set of species with nonzero coefficient in $y$.
The pair $\left(\mathscr{N},K\right)$ is called a \emph{chemical kinetic system}.
\end{definition}

\begin{definition}
A kinetics for a CRN $(\mathscr{S},\mathscr{C},\mathscr{R})$ is \emph{mass action} if for each reaction $y\to y'$ (i.e., $[y_1,y_2,\ldots, y_m]^\top \to [y_1',y_2',\ldots, y_m']^\top$),
$$K_{y\to y'}(x)=k_{y\to y'}\prod _{i \in \mathscr{S}} x_i^{y_i}$$
for some $k_{y\to y'}>0$.
\end{definition}

\begin{definition}
	The \emph{species formation rate function} (SFRF) of a chemical reaction system $(\mathscr{N},K)$ is given by $f\left( x \right) = \displaystyle \sum\limits_{{y} \to {y'} \in \mathscr{R}} {{K_{{y} \to {y'}}}\left( x \right)\left( {{y'} - {y}} \right)}.$
\end{definition}
Note that the SFRF can be written as
$f(x) = NK(x)$ where $N$ is the stoichiometric matrix of the network and $K$ is the vector of rate functions.
The system of \emph{ordinary differential equations} (ODEs) of a chemical kinetic system is given by $\dfrac{{dx}}{{dt}} = f\left( x \right)$ where $x$ is a vector of concentrations of the species 
that evolve over time.

\begin{definition}
A \emph{steady state} or an \emph{equilibrium} is a vector $c$ of species concentrations such that $f(c)=0$. A \emph{positive equilibrium} is an equilibrium where each concentration is positive.
\end{definition}

\begin{definition}
A CRN is said to \emph{admit multiple (positive) equilibria} or is \emph{multi-stationary} if there exist positive rate constants such that the ODE system admits more than one stoichiometrically-compatible equilibria.
\end{definition}


\subsection{Embedded networks and network transformations}
\label{sec:embed}

In this section, we briefly review the concepts of embedded networks and network transformations.
First, the definition of embedded networks from \cite{JOSH2013} is as follows:

\begin{definition}
    Let $\mathscr{N} = \{ \mathscr{S}, \mathscr{C}, \mathscr{R} \}$ be a CRN.
    \begin{enumerate}
        \item Consider a subset of the species set $S \subset \mathscr{S}$, a subset of the complexes set $C \subset \mathscr{C}$, and a subset of the reactions set $R \subset \mathscr{R}$.
        \begin{itemize}
            \item The \emph{restriction of $R$ to $S$}, denoted $R\vert_S$, is the set of reactions obtained by taking the reactions in $R$ and removing all species not in $S$ from the reactant and product complexes. If a trivial reaction (one in which the reactant and product complexes are the same) is obtained in this process, then it is removed. Extra copies of repeated reactions were also removed. The \emph{restriction of $C$ to $R$}, denoted $C\vert_R$, is the set of (reactant and product) complexes of reactions in $R$.
            \item The \emph{restriction of $S$ to $C$}, denoted $S\vert_C$, is the set of species that are in the complexes in $C$.
        \end{itemize}
        \item The network obtained from $\mathscr{N}$ by \emph{removing a subset of species} $\{ X_i \} \subset S$ is the network
        $ \left\{ \mathscr{S} \backslash \{ X_i \}, \mathscr{C}\vert_{\mathscr{R}\vert_{\mathscr{S} \backslash \{ X_i \}}}, \mathscr{R}\vert_{\mathscr{S} \backslash \{ X_i \} } \right\}. $
        \item A subset of the reactions $\mathscr{R}' \subset \mathscr{R}$ defines the \emph{subnetwork} $$\{ \mathscr{S}\vert_{\mathscr{C}\vert_{\mathscr{R}'}}, \mathscr{C}\vert_{\mathscr{R}'}, \mathscr{R}' \}.$$
        \item Let $\mathscr{N} = \{ \mathscr{S}, \mathscr{C}, \mathscr{R} \}$ be a CRN. An \emph{embedded network} of $\mathscr{N}$ is defined by a subset of the species set {$S = \{ X_{i_1}, X_{i_2}, \dots, X_{i_k} \} \subset \mathscr{S}$}, and a subset of the reactions set $R = \{ R_{j_1}, R_{j_2}, \dots, R_{j_l} \} \subset \mathscr{R}$, that involve all species of $S$, is the network $(S, \mathscr{C}\vert_{R\vert_S}, R\vert_S)$ consisting of the reactions $R\vert_S$.
    \end{enumerate}
\end{definition}

\begin{example}
    
    Models in BST have two types of variables: dependent (state variables that change with time) and independent (constants that describe aspects of the process environment). Arceo et al. \cite{AJLM2015} introduced a power law kinetic realization, whereby the species set of the underlying network---called a total representation---consists of the corresponding dependent and independent species. In a subsequent work \cite{AJLM2017}, they also defined the embedded representation of a BST model, which is the embedded network of the total representation relative to the dependent species. The embedded representation corresponds to the popular technique in BST of ``lumping the independent variables to the rate constants''.
\end{example}

The concept of network transformation originated in Nazareno et al. \cite{NEML2019} and was recently extended by Talabis and Mendoza \cite{TAME2023} in their study of network operations. The most well-known example of network transformation is the network translation of mass action systems introduced by Johnston in 2014 \cite{JOHN2014}. Hong et al. recently showed that any network translation can be composed from three network operations: shifting, dividing, and merging \cite{HHLL2022}. In this work, we will use shifting and ``splitting by reaction vector'' to define the network transformations necessary for our study.

We first recall the definition of dynamical equivalence of kinetic systems:

\begin{definition}
     Two kinetic systems $(\mathscr{N}, K)$ and $(\mathscr{N}', K')$ with stoichiometric matrix $N$ and $N'$, respectively, are \emph{dynamically equivalent} if
     \begin{enumerate}
         \item[i.] {$\mathscr{N}$ and $\mathscr{N}'$ have the same set of species;}
         \item[ii.] $K$ and $K'$ have the same definition domain $\Omega = \Omega'$; and 
         \item[iii.] $N K(x) = N' K'(x)$ for all $x \in \Omega$.
     \end{enumerate}
\end{definition}

The last condition implies that the ODE systems of $\mathscr{N}$ and $\mathscr{N}'$ are identical.

 We now define several classes of network transformations:

 \begin{definition}
     A dynamical equivalence between $(\mathscr{N}, K)$ {(with stoichiometric subspace $S$)} and $(\mathscr{N}', K')$ {(with stoichiometric subspace $S'$)} is called
     \begin{enumerate}
         \item[i.] \emph{$S$-extending transformation} if $S$ is contained in $S'$.
         \item[ii.] \emph{$S$-including transformation} if $S$ contains $S'$.
         \item[iii.] \emph{$S$-invariant transformation} if $S = S'$.
         \item[iv.] \emph{Network transformation} (or simply \emph{transformation}) if one of $(i)$, $(ii)$, or $(iii)$ holds.
     \end{enumerate}
 \end{definition}

 Network operations are network transformations that change a single reaction or a pair of reactions in a network. Hence, they can be viewed as ``building blocks'' of transformations. Two useful operations are as follows:

\begin{definition}
\label{def:shifting:splitting}
    \begin{enumerate}
        \item \emph{Shifting} the reaction {$q:y\to y'$} results to $q': y + z \xrightarrow{r} y' + z$ with kinetics $K_{q'} = K_q$ where $z = z_1 X_1 + \dots + z_m X_m$ and $z_i \in \mathbb{Z}$ for all $i = 1, \dots, m$.
        \item \emph{Splitting} [via reaction vector] (RV-splitting) the reaction {$q:y\to y'$} results to the pair of reactions $q': x \xrightarrow{r} x'$ and $q'': z \xrightarrow{r} z'$ with the same kinetics (i.e., $K_q = K_{q'} = K_{q''}$) and $y' - y = (x' - x) + (z' - z)$.
    \end{enumerate}
\end{definition}

Note that shifting is an $S$-invariant transformation while splitting is an $S$-extending one. More examples and applications of network operations and transformations can be found in \cite{TAME2023}.

\section{The method of comparing reaction networks via common species embedded networks}
\label{CSEN:method}

In this section, we introduce a new method for comparing reaction networks of the same or closely related biological systems. We call this method the {\emph{Common Species Embedded Networks (CSEN)}} analysis.
This approach allows us to determine a possible relationship between embedded networks of two CRNs via network operations or transformations, such as dynamical equivalence.

{Before we formulate the steps of the CSEN analysis, we introduce some new concepts as follows.

\begin{definition}
    \label{def:proximate}
    Let $(\mathscr{N}, K)$ and $(\mathscr{N}^*, K^*)$ be kinetic systems. They are \emph{proximately equivalent} if there are kinetics $\widetilde{K}$ and $\widetilde{K}^*$, differing (at most in inflows) from $K$ and $K^*$, respectively, such that $(\mathscr{N}, \widetilde{K})$ is dynamically equivalent to $(\mathscr{N}^*, \widetilde{K}^*)$. When the dynamical equivalence between $(\mathscr{N}, \widetilde{K})$ and $(\mathscr{N}^*, \widetilde{K}^*)$ is a network transformation, we speak of a \emph{proximate transformation} from $(\mathscr{N}, K)$ to $(\mathscr{N}^*, K^*)$.
\end{definition}

\begin{remark}
 \begin{enumerate}
         \item Two kinetic systems differ at most in inflows if they share the same set of reactions, but one system includes at least one inflow $0 \to A_i$ that is absent in the other.
        \item The domain of the chosen proximate transformation is the set of proximate reactions of $\mathscr{N}_E$. Its image is the set of proximate reactions of $\mathscr{N}_E^*$.
\end{enumerate}
\end{remark}

\begin{example}
    \begin{enumerate}
        \item[i.] Dynamically equivalent systems are proximately equivalent.
        \item[ii.] Let $\mathscr{N}_E$ and $\mathscr{N}_E^*$ be CSEN. A maximal proximate transformation from $\mathscr{N}_E$ to $\mathscr{N}_E^*$ is a proximate transformation defined by a subset of non-common reactions of $\mathscr{N}_E$ with the maximal number of elements. Note that there may be more than one such subset; hence, more than one maximal proximate transformation.
    \end{enumerate}
\end{example}

}

We now carry out the method in three steps.
The first step is to construct the embedded networks $\mathscr{N}_E$ and $\mathscr{N}_E^*$ of two networks $\mathscr{N}$ and $\mathscr{N}^*$, respectively, with respect to their set of common species $\mathscr{S}' = \mathscr{S}_\mathscr{N} \cap \mathscr{S}_{\mathscr{N}^*}$.

In the second step, we classify the reactions of each embedded network into 3 subsets or reaction classes:
\begin{enumerate}
    \item Common reactions of $\mathscr{N}$ and $\mathscr{N}^*$: $\mathscr{R}_\mathscr{N} \cap \mathscr{R}_{\mathscr{N}^*}$
    \item Common reactions of the embedded networks which are not in the first class: $(\mathscr{R}_{\mathscr{N}_E} \cap \mathscr{R}_{\mathscr{N}_E^*}) \backslash (\mathscr{R}_\mathscr{N} \cap \mathscr{R}_{\mathscr{N}^*})$
    \item Reactions of $\mathscr{N}_E$ and $\mathscr{N}_E^*$ for further classification into the proximate and non-equivalent subsystems (see Definition \ref{def:proximate})
\end{enumerate}


{The third step consists of the following task:
identify a maximal proximate transformation from $\mathscr{N}_E$ to $\mathscr{N}_E^*$ (if it exists).}

\section{CSEN analysis of the Lee and Feinberg models}
\label{sec:comparisonLeeFAL}

In this section, we compare the reaction networks of the Lee and Feinberg models by first constructing their embedded networks with respect to the set of common species. We then identify network operations that transform the remaining distinct reactions of the Lee network into those of the Feinberg network. Our result shows that the two networks are dynamically very similar and points to a relative aspect of the ``model discrimination'' by MacLean et al. \cite{Maclean} into mono- and multi-stationary systems.



We now illustrate the approach using the Lee and Feinberg networks.

Since the species sets of the Lee and Feinberg networks are $$\mathscr{S}_L = \{A_1,A_2,A_4,A_6,A_7,A_8,A_{10},A_{12},A_{13},A_{22},\ldots,A_{28}\} \backslash \{A_{28}\}$$ and $$\mathscr{S}_F = \{A_1,A_2,A_4,A_6,A_7,A_8,A_{10},A_{12},A_{13},A_{22},\ldots,A_{28}\} \backslash \{A_{22}\},$$ then both networks have 15 species with 14 in common. Table \ref{tab:reactionsNLEandNFE} presents the results of Steps 1 and 2 of the CSEN analysis.

\begin{table}[ht!]
    \centering
    \caption{Reactions of the embedded networks ($\mathscr{N}_{LE}$ and $\mathscr{N}_{FE}$) of the Lee ($\mathscr{N}_{L}$) and Feinberg ($\mathscr{N}_{F}$) models with respect to their common species. Reaction numbers without the superscript $E$ do not change after getting the embedded networks with respect to their common species.}
    \label{tab:reactionsNLEandNFE}
    \begin{tabular}{|l|l|}
        \hline
        \multicolumn{2}{|c|}{Common to $\mathscr{N}_{LE}$ and $\mathscr{N}_{FE}$} \\
        \hline
        \multicolumn{2}{|l|}{\hspace{2.5 cm} $R_1: 0 \rightarrow A_4$} \\
        \multicolumn{2}{|l|}{\hspace{2.5 cm} $R_4: A_1 + A_4 \rightarrow A_8$} \\
        \multicolumn{2}{|l|}{\hspace{2.5 cm} $R_5: A_8 \rightarrow A_1  +  A_4$} \\
        \multicolumn{2}{|l|}{\hspace{2.5 cm} $R_{12}: A_{10} \rightarrow 0$} \\
        \multicolumn{2}{|l|}{\hspace{2.5 cm} $R_{14}: A_1 \rightarrow A_2$} \\
        \multicolumn{2}{|l|}{\hspace{2.5 cm} $R_{15}: A_2 \rightarrow A_1$} \\
        \multicolumn{2}{|l|}{\hspace{2.5 cm} $R_{18}: A_{12} \rightarrow A_{13}$} \\
        \multicolumn{2}{|l|}{\hspace{2.5 cm} $R_{19}: A_{13} \rightarrow A_{12}$} \\
        \multicolumn{2}{|l|}{\hspace{2.5 cm} $R_{38}: A_4 \rightarrow 0$} \\
        \multicolumn{2}{|l|}{\hspace{2.5 cm} $R_{43}: A_{24} + A_{26} \rightarrow A_{23}$} \\
        \multicolumn{2}{|l|}{\hspace{2.5 cm} $R_{44}: A_{23} \rightarrow A_{24} + A_{26}$} \\
        \multicolumn{2}{|l|}{\hspace{2.5 cm} $R_{45}: A_8 \rightarrow A_{25}$} \\
        \multicolumn{2}{|l|}{\hspace{2.5 cm} $R_{46}: A_{25} \rightarrow A_1 + A_{10}$} \\
        \multicolumn{2}{|l|}{\hspace{2.5 cm} $R_{47}: 0 \rightarrow A_{26}$} \\
        \multicolumn{2}{|l|}{\hspace{2.5 cm} $R_{48}: A_{26} \rightarrow 0$} \\
        \multicolumn{2}{|l|}{\hspace{2.5 cm} $R_{49}: A_4 + A_6 \rightarrow A_7$} \\
        \multicolumn{2}{|l|}{\hspace{2.5 cm} $R_{50}: A_7 \rightarrow A_4 + A_6$} \\
        \multicolumn{2}{|l|}{\hspace{2.5 cm} $R_{51}: A_{24} + A_4 \rightarrow A_{27}$} \\
        \multicolumn{2}{|l|}{\hspace{2.5 cm} $R_{52}: A_{27} \rightarrow A_{24} + A_4$} \\
        \hline
        \hline
        \multicolumn{2}{|c|}{Embedding-derived common reactions} \\
        \hline
        \multicolumn{2}{|l|}{\hspace{2.5 cm} $R_{41}^E:  A_{23} \rightarrow A_2$} \\ 
        \multicolumn{2}{|l|}{\hspace{2.5 cm} $R_{42}^E: A_2 \rightarrow A_{23}$} \\ 
        \hline
        \hline
        \multicolumn{1}{|c|}{Unique to $\mathscr{N}_{LE}$} & \multicolumn{1}{c|}{Unique to $\mathscr{N}_{FE}$} \\
        \hline
        $R_{40}^E: A_{13} + A_2 \rightarrow A_{13} + A_{23}$ & $R_{53}^E: A_{13} + A_2 \rightarrow 0$ \\
         & $R_{56}^E: 0 \rightarrow A_{13} + A_{23}$ \\
        \hline
    \end{tabular}
\end{table}

\begin{theorem}
    \label{thm:4.8}
    Let $\mathscr{N}_{LE}$ and $\mathscr{N}_{FE}$ be the embedded networks of $\mathscr{N}_{L}$ and $\mathscr{N}_{F}$, respectively, with respect to their set of common species. Then $\mathscr{N}_{LE}$ with mass action kinetics is dynamically equivalent to $\mathscr{N}_{FE}$ with mass action kinetics except for an inflow reaction. The dynamical equivalence is via an S-extending network operation.
\end{theorem}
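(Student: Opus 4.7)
The strategy is to read off, from Table~\ref{tab:reactionsNLEandNFE}, exactly which reactions must be converted by a network operation. All reactions in the ``Common'' and ``Embedding-derived common'' blocks already coincide, so the only discrepancy is that $\mathscr{N}_{LE}$ contains $R_{40}^E: A_{13}+A_2 \rightarrow A_{13}+A_{23}$ while $\mathscr{N}_{FE}$ contains the pair $R_{53}^E: A_{13}+A_2 \rightarrow 0$ and $R_{56}^E: 0 \rightarrow A_{13}+A_{23}$. Hence the proof reduces to exhibiting a single network operation on $R_{40}^E$ that produces exactly this pair, and then checking that the resulting kinetic system differs from mass action on $\mathscr{N}_{FE}$ only in the rate assigned to one inflow.

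The first step is a reaction-vector check. The reaction vector of $R_{40}^E$ equals $A_{23} - A_2$, while those of $R_{53}^E$ and $R_{56}^E$ are $-A_{13}-A_2$ and $A_{13}+A_{23}$, whose sum is precisely $A_{23}-A_2$. Therefore the RV-splitting of Definition~\ref{def:shifting:splitting} applied to $R_{40}^E$, with the pair $x\to x'$ taken to be $R_{53}^E$ and $z\to z'$ taken to be $R_{56}^E$, is valid. It is $S$-extending by the remark immediately following that definition; indeed, the two new vectors are linearly independent while their sum lies in the original one-dimensional span, so $S_{\mathscr{N}_{LE}} \subsetneq S_{\mathscr{N}_{FE}}$.

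The second step is to track kinetics. If $\mathscr{N}_{LE}$ carries mass action kinetics with rate $k[A_{13}][A_2]$ for $R_{40}^E$, then the RV-splitting assigns this same rate function to both $R_{53}^E$ and $R_{56}^E$. For $R_{53}^E: A_{13}+A_2 \rightarrow 0$, this agrees exactly with mass action at rate constant $k$, so by choosing the corresponding rate constant in $\mathscr{N}_{FE}$ to equal $k$ one matches the Feinberg-side kinetics on this reaction. For the inflow $R_{56}^E: 0 \rightarrow A_{13}+A_{23}$, mass action would require a constant rate, while the rate inherited from $R_{40}^E$ is $k[A_{13}][A_2]$, which is non-constant; hence the resulting kinetic system fails to be mass action only at this single inflow, which is precisely the sense of ``except for an inflow reaction'' and matches Definition~\ref{def:proximate}.

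The main obstacle is interpretive rather than computational: one must be careful to state the conclusion modulo the single non-mass-action inflow rate rather than claim strict mass-action dynamical equivalence on both sides. Once the balance of reaction vectors is recognized and the claim is phrased in the proximate-equivalence language of Definition~\ref{def:proximate}, the remainder of the argument is a direct comparison of rate functions summand by summand in the species formation rate function, using that every other reaction already coincides between the two embedded networks.
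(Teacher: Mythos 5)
Your proposal is correct and follows essentially the same route as the paper: the 21 table reactions coincide, the lone Lee reaction $A_{13}+A_2 \rightarrow A_{13}+A_{23}$ is RV-split into $A_{13}+A_2 \rightarrow 0$ and $0 \rightarrow A_{13}+A_{23}$ with the same rate function, the operation is $S$-extending (rank of $\mathscr{N}_{FE}$ one greater), and the only failure of mass action is the aberrant kinetics $kA_{13}A_2$ on the inflow. Your explicit kinetics check is exactly what the paper records in the remark immediately following its proof, so the content matches.
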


\begin{proof}
    The common reactions of $\mathscr{N}_{LE}$ and $\mathscr{N}_{FE}$ consist of the 21 reactions listed in Table \ref{tab:reactionsNLEandNFE}. The other remaining reactions in Feinberg become $A_{13} + A_2 \to 0 \to A_{13} + A_{23}$. This is the transform by RV-splitting (an $S$-extending network operation) from $A_{13} + A_2 \to A_{13} + A_{23}$ in the Lee network. Furthermore, the rank of $\mathscr{N}_{LE}$ is one less than that of $\mathscr{N}_{FE}$. 
\end{proof}

\begin{remark}
    \begin{enumerate}
        \item {The aberrant kinetics on the inflow $0 \to A_{13} + A_{23}$ is given by $kA_{13}A_{2}$ (versus the usual constant kinetics convention). The choice of this kinetics is due to splitting the reaction $A_{13}+A_2 \to A_{13}+A_{23}$ (with the usual kinetics $kA_{13}A_{2}$) into $A_{13}+A_2 \to 0$ and $0 \to A_{13}+A_{23}$ (with the same kinetics) as seen, in general, in Definition \ref{def:shifting:splitting}.} 
        \item The expansion of the stoichiometric subspace potentially explains the mono-/multi-stationarity difference, as equilibria previously in separate stoichiometric classes can fall into the same larger class. Since the embedded networks describe the projection of processes in the larger (different) 15-dimensional species space onto the common 14-dimensional subspace, our result shows the strong dynamical similarity between the Lee and Feinberg systems.
    \end{enumerate}
\end{remark}

For Step 3 of the CSEN analysis, it is shown in the proof of Theorem \ref{thm:4.8} that the subnetworks of the non-common reactions are proximately equivalent. Thus, the embedded networks of the Lee and Feinberg models are proximately equivalent. In other words, there is a proximate transformation from $\mathscr{N}_{LE}$ to $\mathscr{N}_{FE}$.

\section{CSEN analysis of the Schmitz and MacLean models}
\label{section:Schmitz:MacLean}


\label{section:CSEN}

The coincidence of many structo-kinetic (e.g., multi-stationarity) and kinetic (e.g., ACR) properties of the Schmitz network ($\mathscr{N}_{S}$) and the MacLean network ($\mathscr{N}_{M}$) suggests a structural relationship between them. In this section, we use CSEN analysis to derive such a relationship between their embedded networks, denoted by $\mathscr{N}_{SE}$ and $\mathscr{N}_{ME}$, induced by the subset of common species. More precisely, we show that
\begin{itemize}
    \item A network transformation of $\mathscr{N}_{SE}$ coincides with $\mathscr{N}_{MEA}$ ($:= \mathscr{N}_{ME}$ less 2 outflows plus 1 reversible pair of ``flow'' reactions); and
    \item The kinetics of $\mathscr{N}_{SE}$ is mass action, and that of $\mathscr{N}_{ME}$, while generalized mass action, is mass action except for 3 ``flow'' reactions.
\end{itemize}

\begin{remark}
In mass action systems, the rate function of a reaction is based on the stoichiometric coefficients in the reactant complex. For example, under mass action kinetics, the reaction $A_5 + A_3 \to A_9$ has the rate function $k a_5 a_3$.
However in generalized mass action systems, introduced by Stefan M\"{u}ller and Georg Regensburger \cite{SMGR2012}, the rate function is not necessarily based on the stoichiometric coefficients in the reactant complex. In the example above, under generalized mass action kinetics (GMAK), the rate function can have the form $k a_5^\alpha a_3^\beta$ where $\alpha$ and $\beta$ can be any positive real number.
\end{remark}

Table \ref{tab:reactionsNSEandNME} shows the results of the CSEN analysis Steps 1 and 2.

\begin{table}[ht!]
    \centering
    \caption{Reactions of the embedded networks ($\mathscr{N}_{SE}$ and $\mathscr{N}_{ME}$) of the Schmitz ($\mathscr{N}_{S}$) and MacLean ($\mathscr{N}_{M}$) models with respect to their common species. Reactions without the superscript $E$ do not change after getting the embedded networks with respect to the common species.}
    \label{tab:reactionsNSEandNME}
    \begin{tabular}{|l|l|}
        \hline
        \multicolumn{2}{|c|}{Common to $\mathscr{N}_{SE}$ and $\mathscr{N}_{ME}$} \\
        \hline
        \multicolumn{2}{|l|}{\hspace{1.5 cm} $R_1: 0 \rightarrow A_4$} \\
        \multicolumn{2}{|l|}{\hspace{1.5 cm} $R_2: A_4 \rightarrow A_5$} \\
        \multicolumn{2}{|l|}{\hspace{1.5 cm} $R_3: A_5 \rightarrow A_4$} \\
        \multicolumn{2}{|l|}{\hspace{1.5 cm} $R_4: A_1 + A_4 \rightarrow A_8$} \\
        \multicolumn{2}{|l|}{\hspace{1.5 cm} $R_5: A_8 \rightarrow A_1 + A_4$} \\
        \multicolumn{2}{|l|}{\hspace{1.5 cm} $R_6: A_5 + A_3 \rightarrow A_9$} \\
        \multicolumn{2}{|l|}{\hspace{1.5 cm} $R_7: A_9 \rightarrow A_5 + A_3$} \\
        \multicolumn{2}{|l|}{\hspace{1.5 cm} $R_8: A_6 + A_5 \rightarrow A_7$} \\
        \multicolumn{2}{|l|}{\hspace{1.5 cm} $R_9: A_7 \rightarrow A_6 + A_5$} \\
        \hline
        \hline
        \multicolumn{2}{|c|}{Embedding-derived common reactions} \\
        \hline
        \multicolumn{2}{|l|}{\hspace{1.5 cm} $R_{10}^E: A_8 \rightarrow A_1$} \\ 
        \multicolumn{2}{|l|}{\hspace{1.5 cm} $R_{11}^E: A_9 \rightarrow A_3$} \\ 
        \hline
        \hline
        \multicolumn{1}{|c|}{Unique to $\mathscr{N}_{SE}$} & \multicolumn{1}{c|}{Unique to $\mathscr{N}_{ME}$} \\
        \hline
        $R_{14}: A_1 \rightarrow A_2$ & $R_{38}: A_4 \rightarrow 0$ \\
        $R_{15}: A_2 \rightarrow A_1$ & $R_{39}: A_5 \rightarrow 0$ \\
        $R_{16}: A_1 \rightarrow A_3$ & $R_{22}^E: A_2 \rightarrow 0$ \\ 
        $R_{17}: A_3 \rightarrow A_1$ & $R_{23}^E: 0 \rightarrow A_2$ \\ 
            & $R_{24}^E: A_3 \rightarrow 0$ \\
            & $R_{25}^E: 0 \rightarrow A_3$ \\ 
            & $R_{30}^E: A_1 \rightarrow 0$ \\
            & $R_{31}^E: 0 \rightarrow A_1$ \\ 
        \hline
    \end{tabular}
\end{table}

For the embedded Schmitz network, the common species comprise a large subset (9 in $\mathscr{N}_{ME}$ out of the 11 in $\mathscr{N}_S$) and the reactions generate a large subnetwork ($\mathscr{N}_{ME}$ has rank 7 while $\mathscr{N}_S$ has rank 9). This  observation already suggests an ``inclusion/extension-like'' relationship between the networks.

The set of new common reactions is $(\mathscr{R}_{SE} \cap  \mathscr{R}_{ME}) \backslash (\mathscr{R}_S \cap \mathscr{R}_M) = \{ A_8 \rightarrow A_1, A_9 \rightarrow A_3 \}$. Hence, the total of common reactions make up 11 of the 15 reactions in $\mathscr{N}_{SE}$.

The results of Step 3 are as follows: the following table defines two maximal proximate transformation from $\mathscr{N}_{SE}$ to $\mathscr{N}_{ME}$:
\begin{center}
    \begin{tabular}{|l|l|l|}
        \hline
        \multicolumn{1}{|c|}{Reaction} & \multicolumn{1}{c|}{Operation} & \multicolumn{1}{c|}{Parameter} \\
        \hline
        $A_1 \rightarrow A_2$ & Splitting & $A_2 - A_1 = (A_2 - 0) + (0 - A_1)$ \\
        \hline
        $A_2 \rightarrow A_1$ & Splitting & $A_1 - A_2 = (A_1 - 0) + (0 - A_2)$ \\
        \hline
        $A_1 \rightarrow A_3$ & Splitting & $A_3 - A_1 = (A_3 - 0) + (0 - A_1)$ \\
        \hline
        $A_3 \rightarrow A_1$ & Splitting & $A_1 - A_3 = (A_1 - 0) + (0 - A_3)$ \\
        \hline
    \end{tabular}
\end{center}

The top two rows define the first maximal proximate transformation, the bottom two the second. We choose the first and, hence, consider the reactions $\{ A_3 \rightleftarrows A_1 \}$ and $\{ A_3 \rightleftarrows 0, A_4 \rightarrow 0, A_5 \rightarrow 0 \}$ as the non-equivalent reactions of the CSEN.

Defining the augmented MacLean CSEN as $\mathscr{N}_{MEA} := \mathscr{N}_{ME} \cup \{ A_1 \rightleftarrows 2 A_1 \}$, we can extend the previous considerations to a ``structural-kinetic relationship'' as follows:

\begin{theorem}
    The (mass action) system $\mathscr{N}_{SE}$ transforms to the subnetwork $\mathscr{N}_{MEA} \backslash \{ A_4 \rightarrow 0, A_5 \rightarrow 0 \}$, a generalized mass action system, which is predominantly mass action (14 of 19 reactions). Moreover, the transformation is the identity in 11 of the 15 reactions of $\mathscr{N}_{SE}$.
\end{theorem}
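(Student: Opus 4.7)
The plan is to build an explicit network transformation from $\mathscr{N}_{SE}$ (mass action) to $\mathscr{N}_{MEA} \backslash \{A_4 \to 0,\, A_5 \to 0\}$ by composing RV-splittings and shifts on only the four reactions of $\mathscr{N}_{SE}$ that are not in the ``common'' block of the preceding table, and then auditing the kinetics of each resulting reaction. I leave the $11$ common reactions untouched, which immediately gives the ``identity on $11$ of $15$'' clause and contributes $11$ mass action reactions to the target. The remaining reactions of $\mathscr{N}_{SE}$ are $A_1 \rightleftarrows A_2$ and $A_1 \rightleftarrows A_3$, and their images under the transformation I construct must exhaust the eight reactions $\{A_1 \to 0,\, 0 \to A_1,\, A_2 \to 0,\, 0 \to A_2,\, A_3 \to 0,\, 0 \to A_3,\, A_1 \to 2A_1,\, 2A_1 \to A_1\}$ of $\mathscr{N}_{MEA} \backslash \{A_4 \to 0,\, A_5 \to 0\}$.

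For $A_1 \rightleftarrows A_2$ I use the first maximal proximate transformation already selected in the paper: RV-split $A_1 \to A_2$ via $A_2 - A_1 = (A_2 - 0) + (0 - A_1)$ to produce $\{A_1 \to 0,\, 0 \to A_2\}$, and split $A_2 \to A_1$ analogously to produce $\{A_2 \to 0,\, 0 \to A_1\}$. For $A_1 \rightleftarrows A_3$ I apply the same zero-insertion RV-splitting to get a second copy of $A_1 \to 0$ together with $0 \to A_3$ from $A_1 \to A_3$, and a second copy of $0 \to A_1$ together with $A_3 \to 0$ from $A_3 \to A_1$. I then shift the duplicated $A_1 \to 0$ by $A_1$ to obtain $2A_1 \to A_1$, and shift the duplicated $0 \to A_1$ by $A_1$ to obtain $A_1 \to 2A_1$. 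By the definitions of splitting and shifting, RV-split is an $S$-extending dynamical equivalence and shift is an $S$-invariant one, so their composition is a network transformation whose SFRF agrees with that of $\mathscr{N}_{SE}$.

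Finally I audit the kinetics of the eight transformed reactions. Each split or shifted reaction inherits the rate function of its parent. The three outflows $A_1 \to 0,\, A_2 \to 0,\, A_3 \to 0$ inherit rates of the form $k a_i$ in their lone reactants and are therefore mass action. The three inflows $0 \to A_2,\, 0 \to A_3,\, 0 \to A_1$ inherit rates of the form $k a_j$ with $A_j$ not present as a reactant, so they are aberrant inflows and hence generalized mass action. The shifted reactions $2A_1 \to A_1$ and $A_1 \to 2A_1$ inherit rates proportional to $a_1$ and $a_3$, respectively, which do not match the mass action monomials $k a_1^2$ and $k a_1$ for their new reactant complexes, so they are generalized mass action as well. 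Tallying, the target has $11 + 3 = 14$ mass action reactions and $3 + 2 = 5$ generalized mass action reactions, for a total of $19$.

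The main obstacle is the bookkeeping for duplicates: the naive splittings of $A_1 \to A_2$ and $A_1 \to A_3$ both produce $A_1 \to 0$, and symmetrically the reverses both produce $0 \to A_1$. The technical trick is to resolve these duplications by shifting one copy by $A_1$, which is precisely what forces the augmentation $\mathscr{N}_{MEA} = \mathscr{N}_{ME} \cup \{A_1 \rightleftarrows 2A_1\}$ and explains why those two extra reactions must appear in the target network.
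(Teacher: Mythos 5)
Your proposal is correct and follows essentially the same route as the paper's proof: leave the 11 common reactions fixed, RV-split the four shuttle reactions $A_1 \rightleftarrows A_2$ and $A_1 \rightleftarrows A_3$ into inflow/outflow pairs, shift the duplicate copies of $A_1 \to 0$ and $0 \to A_1$ into $2A_1 \to A_1$ and $A_1 \to 2A_1$, and audit the inherited kinetics to get the $14$-of-$19$ mass action count. Your explicit duplicate bookkeeping (and assigning the rate $\propto a_1$ to $2A_1 \to A_1$) is if anything a cleaner rendering of the paper's table, which records that reaction's kinetics as $K_3$ where consistency of the splitting would give $K_1$.
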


\begin{proof}
    The following table describes the GMAK details. We set $K_i = k_i A_i$ where $k_i > 0$, i.e., linear kinetics for the species $A_i$:
    \begin{center}
    \begin{tabular}{|l|l|l|}
        \hline
        \multicolumn{1}{|c|}{Reaction} & \multicolumn{1}{c|}{Transformation 1/} & \multicolumn{1}{c|}{Transformation 2/} \\
        & \multicolumn{1}{c|}{ Kinetics/Type} & \multicolumn{1}{c|}{ Kinetics/Type} \\
        \hline
        $A_1 \rightarrow A_2$ & $0 \rightarrow A_2$/ $K_1$/ GMAK & $A_1 \rightarrow 0$/ $K_1$/ MAK \\
        \hline
        $A_2 \rightarrow A_1$ & $0 \rightarrow A_1$/ $K_2$/ GMAK & $A_2 \rightarrow 0$/ $K_2$/ MAK \\
        \hline
        $A_1 \rightarrow A_3$ & $0 \rightarrow A_3$/ $K_1$/ GMAK & $A_1 \rightarrow 0$/ $K_1$/ MAK \\
        \hline
        $A_3 \rightarrow A_1$ & $0 \rightarrow A_1$/ $K_3$/ GMAK & $A_3 \rightarrow 0$/ $K_3$/ MAK \\
        \hline
        $0 \rightarrow A_1$ & $A_1 \rightarrow 2 A_1$/ $K_3$/ GMAK & \\
        \hline
        $A_1 \rightarrow 0$ & $2 A_1 \rightarrow A_1$/ $K_3$/ GMAK & \\
        \hline
    \end{tabular}
    \end{center}
    The rest of the reactions (and their mass action kinetics) in $\mathscr{N}_{SE}$ remain unchanged. This proves the claim that $\mathscr{N}_{MEA}$ is a network transformation of $\mathscr{N}_{SE}$.
\end{proof}

\section{CSEN analysis of the Feinberg and\\MacLean models}
\label{section:comp:FAL:MacLean}

Table \ref{tab:reactionsNFandNM} presents the species and reactions of the Feinberg ($\mathscr{N}_F$) and MacLean ($\mathscr{N}_M$) models. The Feinberg and MacLean networks have eight common species: $A_1$, $A_2$, $A_4$, $A_6$, $A_7$, $A_8$, $A_{12}$, and $A_{13}$, and six common reactions: $R_1$, $R_4$, $R_5$, $R_{18}$, $R_{19}$ and $R_{38}$. Following the discussion in Section \ref{sec:embed}, the embedded networks of the two models induced by their common species are given in Table \ref{tab:reactionsNFEandNME}.

\begin{table}[ht!]
	\centering
	\caption{Species and reactions of the Feinberg ($\mathscr{N}_F$) and MacLean ($\mathscr{N}_M$) Wnt signaling models}
	\label{tab:reactionsNFandNM}
	\begin{tabular}{|l|l|}
		\hline
		\multicolumn{2}{|c|}{Common to $\mathscr{N}_F$ and $\mathscr{N}_M$} \\
            \hline
            \multicolumn{2}{|c|}{$A_1$, $A_2$, $A_4$ $A_6$, $A_7$, $A_8$, $A_{12}$, $A_{13}$} \\
		\hline
		\multicolumn{2}{|l|}{\hspace{2 cm} $R_1: 0 \rightarrow A_4$} \\
		\multicolumn{2}{|l|}{\hspace{2 cm} $R_4: A_1+A_4 \rightarrow A_8$} \\
		\multicolumn{2}{|l|}{\hspace{2 cm} $R_5: A_8 \rightarrow A_1 + A_4$} \\
		\multicolumn{2}{|l|}{\hspace{2 cm} $R_{18}: A_{12} \rightarrow A_{13}$} \\
		\multicolumn{2}{|l|}{\hspace{2 cm} $R_{19}: A_{13} \rightarrow A_{12}$} \\
		\multicolumn{2}{|l|}{\hspace{2 cm} $R_{38}: A_4 \rightarrow 0$} \\
		\hline
            \hline
		\multicolumn{1}{|c|}{Unique to $\mathscr{N}_F$} & \multicolumn{1}{c|}{Unique to $\mathscr{N}_M$} \\
            \hline
            \multicolumn{1}{|c|}{$A_{10}$, $A_{23}$, \dots, $A_{28}$} & \multicolumn{1}{c|}{$A_3$, $A_5$, $A_9$, $A_{14}$, \dots, $A_{21}$} \\
		\hline
        $R_{12}: A_{10} \rightarrow 0$ & $R_{2}: A_{4} \rightarrow A_{5}$ \\
		$R_{14}: A_1 \rightarrow A_2$ & $R_{3}: A_{5} \rightarrow A_{4}$ \\
		$R_{15}: A_{2} \rightarrow A_1$ & $R_{6}: A_{5} +A_3 \rightarrow A_{9}$ \\
		$R_{43}: A_{24} + A_{26} \rightarrow A_{23}$ & $R_{7}: A_{9} \rightarrow A_{5}+A_3$ \\
		$R_{44}: A_{23} \rightarrow A_{24} + A_{26}$ & $R_{8}: A_6 +A_5 \rightarrow A_{7}$ \\
		$R_{45}: A_8 \rightarrow A_{25}$ & $R_{9}: A_{7} \rightarrow A_6 + A_5$ \\
		$R_{46}: A_{25} \rightarrow A_1 + A_{10}$ & $R_{20}: A_{13} \rightarrow A_{14}$ \\
		$R_{47}: 0 \rightarrow A_{26}$ & $R_{21}: A_{14} \rightarrow A_{13}$ \\
		$R_{48}: A_{26} \rightarrow 0$ & $R_{22}: A_2 \rightarrow A_{15}$ \\
		$R_{49}: A_4 + A_6 \rightarrow A_7$ & $R_{23}: A_{15} \rightarrow A_2$ \\
		$R_{50}: A_7 \rightarrow A_4 + A_6$ & $R_{24}: A_3 + A_{14} \rightarrow A_{19}$ \\
		$R_{51}: A_{24} + A_4 \rightarrow A_{27}$ & $R_{25}: A_{19} \rightarrow A_3 + A_{14}$ \\
        $R_{52}: A_{27} \rightarrow A_{24} + A_4$ & $R_{26}: A_{19} \rightarrow A_{14} + A_{15}$ \\
        $R_{53}: A_{13} + A_2 \rightarrow A_{28}$
    	& $R_{27}: A_{15} + A_{17} \rightarrow A_{21}$ \\
     $R_{54}: A_2 \rightarrow A_{23}$
    	& $R_{28}: A_{21} \rightarrow A_{15} + A_{17}$ \\
     $R_{55}: A_{23} \rightarrow A_2$
    	& $R_{29}: A_{21} \rightarrow A_3 + A_{17}$ \\
     $R_{56}: A_{28} \rightarrow A_{13} + A_{23}$
            & $R_{30}: A_{13} + A_1 \rightarrow A_{18}$ \\
    	& $R_{31}: A_{18} \rightarrow A_{13} +A_1$ \\
    	& $R_{32}: A_{18} \rightarrow A_{13} +A_2$ \\
    	& $R_{33}: A_2 + A_{16} \rightarrow A_{20}$ \\
    	& $R_{34}: A_{20} \rightarrow A_2 + A_{16}$ \\
    	& $R_{35}: A_{20} \rightarrow A_1 + A_{16}$ \\
    	& $R_{36}: A_8 \rightarrow A_1$ \\
    	& $R_{37}: A_9 \rightarrow A_3$ \\
    	& $R_{39}: A_5 \rightarrow 0$ \\
		\hline
	\end{tabular}
\end{table}

\begin{table}[ht!]
    \centering
    \caption{Reactions of the embedded networks ($\mathscr{N}_{FE}$ and $\mathscr{N}_{ME}$) of the Feinberg ($\mathscr{N}_{F}$) and MacLean ($\mathscr{N}_{M}$) models with respect to their common species. Reactions without the superscript $E$ do not change after getting the embedded networks with respect to the common species.}
	\label{tab:reactionsNFEandNME}
	\begin{tabular}{|l|l|}
		\hline
		\multicolumn{2}{|c|}{Common to $\mathscr{N}_{FE}$ and $\mathscr{N}_{ME}$} \\
		\hline
		\multicolumn{2}{|l|}{\hspace{2 cm} $R_1: 0 \rightarrow A_4$} \\ 
		\multicolumn{2}{|l|}{\hspace{2 cm} $R_4: A_1+A_4 \rightarrow A_8$} \\
		\multicolumn{2}{|l|}{\hspace{2 cm} $R_5: A_8 \rightarrow A_1 + A_4$} \\
		\multicolumn{2}{|l|}{\hspace{2 cm} $R_{18}: A_{12} \rightarrow A_{13}$} \\
		\multicolumn{2}{|l|}{\hspace{2 cm} $R_{19}: A_{13} \rightarrow A_{12}$} \\
		\multicolumn{2}{|l|}{\hspace{2 cm} $R_{38}: A_4 \rightarrow 0$} \\ 
            \hline
            \hline
            \multicolumn{2}{|c|}{Embedding-derived common reactions} \\
            \hline
    	\multicolumn{2}{|l|}{\hspace{2 cm} $R_{46}^E:  0 \rightarrow A_1$} \\ 
    	\multicolumn{2}{|l|}{\hspace{2 cm} $R_{54}^E: A_2 \rightarrow {0}$} \\ 
        \multicolumn{2}{|l|}{\hspace{2 cm} $R_{55}^E:  0 \rightarrow A_2$} \\ 
        \multicolumn{2}{|l|}{\hspace{2 cm} $R_{56}^E:  0 \rightarrow A_{13}$} \\ 
		\hline
            \hline
		\multicolumn{1}{|c|}{Unique to $\mathscr{N}_{FE}$} & \multicolumn{1}{c|}{Unique to $\mathscr{N}_{ME}$} \\
		\hline
		$R_{14}: A_{1} \rightarrow A_{2}$ & $R_{8}^E: A_{6} \rightarrow A_7$ \\
        $R_{15}: A_2 \rightarrow A_{1}$ & $R_{9}^E: A_{7} \rightarrow A_6$\\
        $R_{45}^E: A_8 \rightarrow 0$ & $R_{20}^E: A_{13} \rightarrow 0$\\
        $R_{49}: A_4 + A_6 \rightarrow A_{7}$ & $R_{30}^E: A_{13}+A_1 \rightarrow 0$\\
        $R_{50}: A_7 \rightarrow A_4 + A_6$ & $R_{31}^E: 0 \rightarrow A_{13}+A_1$\\
        $R_{53}^E: A_{13} + A_2 \rightarrow 0$ & $R_{32}^E: 0 \rightarrow A_{13}+A_2$\\
        & $R_{36}^E: A_8 \rightarrow A_{1}$\\
		\hline
	\end{tabular}
\end{table}




Despite extensive systematic search, we were unable to identify any proximate equivalences. This points to a limitation of the relationship of the two embedded networks to the purely structural coincidence of 10 of the 15--16 reactions.

\section{Summary and recommendation}
\label{sec:summary}
We presented a novel method, which we call the Common Species Embedded Networks (CSEN) analysis, to compare reaction networks in closely related systems. We applied this approach to assess existing reaction networks associated with Wnt signaling models (Lee, Schmitz, MacLean, and Feinberg).

The CSEN analysis produced three interesting results.
Firstly, the analysis revealed a strong similarity between the Lee (mono-stationary) and Feinberg (multi-stationary) models. This result challenges the absolute discrimination of models into mono-stationarity and multi-stationarity by Maclean et al.
Second, our analysis provided evidence supporting similarity between the Schmitz and MacLean models, highlighted by the identified ``proximate equivalence."
Lastly, the analysis emphasized the absence of a comparable relationship between the Feinberg and MacLean models, emphasizing distinctive differences between the two. {Situations where the search for proximate transformations become challenging (as was the case in Section \ref{section:comp:FAL:MacLean}) present further research opportunities.}

The CSEN analysis introduced in this study offers a valuable tool for comparing mathematical models in related systems. This approach may be useful in uncovering structural similarities and differences between reaction networks. Further exploration and application of the CSEN analysis could enhance our understanding of the dynamics of closely related systems and contribute to refining mathematical models in biological and chemical contexts.


\section*{Acknowledgement}
BSH acknowledge the Personally Funded Research support from the Institute of Mathematics, College of Science, and the Office of the Vice Chancellor for Research and Development of the University of the Philippines Diliman. BSH was awarded the MacArthur \& Josefina De los Reyes Professorial Chair in Mathematics of the 2023 UPFI Inc. Professorial Chairs and Faculty Grants (OVCAA, UP Diliman) for this project.



\appendix
\section{Definition of variables}
\label{app:var}

Table \ref{tab:species} provides the species in the Wnt signaling networks that we consider.

\begin{table}[ht!]
\centering
\caption{Species and corresponding biomolecules that can occur in the Wnt signaling models considered}
\label{tab:species}
\begin{tabular}{|c|l|}
    \hline
    Species & \multicolumn{1}{c|}{Meaning} \\
    \hline
    $A_1$ & destruction complex (DC) (active form) \\
    \hline
    $A_2$ & DC (inactive form) \\
    \hline
    $A_3$ & active DC residing in the nucleus \\
    \hline
    $A_4$ & $\beta$-catenin \\
    \hline
    $A_5$ & $\beta$-catenin in the nucleus \\
    \hline
    $A_6$ & T-cell factor (TCF) \\
    \hline
    $A_7$ & $\beta$-catenin-TCF complex \\
    \hline
    $A_8$ & $\beta$-catenin bound with DC \\
    \hline
    $A_9$ & $\beta$-catenin bound with DC in the nucleus \\
    \hline
    $A_{10}$ & $\beta$-catenin (for proteasomal degradation) \\
    \hline
    $A_{11}$ & $\beta$-catenin (for proteasomal degradation) in the nucleus \\
    \hline
    $A_{12}$ & dishevelled (inactive form) \\
    \hline
    $A_{13}$ & dishevelled (active form) \\
    \hline
    $A_{14}$ & active dishevelled in the nucleus \\
    \hline
    $A_{15}$ & inactive DC in the nucleus \\
    \hline
    $A_{16}$ & phosphatase \\
    \hline
    $A_{17}$ & phosphatase in the nucleus \\
    \hline
    $A_{18}$ & active DC bound with dishevelled \\
    \hline
    $A_{19}$ & active DC bound with dishevelled in the nucleus \\
    \hline
    $A_{20}$ &active DC bound with phosphatase \\
    \hline
    $A_{21}$ & active DC bound with phosphatase in the nucleus \\
    \hline
    $A_{22}$ & GSK3$\beta$ \\
    \hline
    $A_{23}$ & axin-APC complex \\
    \hline
    $A_{24}$ & APC \\
    \hline
    $A_{25}$ & $\beta$-catenin bound with DC (for proteasomal degradation) \\
    \hline
    $A_{26}$ & axin \\
    \hline
    $A_{27}$ & $\beta$-catenin-axin complex \\
    \hline
    $A_{28}$ & a complex considered as a single species (in \cite{FeinbergBook}): \\
    & ($A_{13}+A_{22}+A_{23}=A_{28}$)\\
    \hline
\end{tabular}
\end{table}

\section{Reaction networks of Wnt signaling models}\label{reaction:networks}

\subsection{Lee Model}
\label{app:Lee}

The following is the reaction network for the Lee model:

\allowdisplaybreaks
\begin{multicols}{2}
\noindent
\begin{align*}
& R_1: 0 \rightarrow A_4 \\
& R_4: A_1 + A_4 \rightarrow A_8 \\
& R_5: A_8 \rightarrow A_1  +  A_4 \\
& R_{12}: A_{10} \rightarrow 0 \\
& R_{14}: A_1 \rightarrow A_2 \\
& R_{15}: A_2 \rightarrow A_1 \\
& R_{18}: A_{12} \rightarrow A_{13} \\
& R_{19}: A_{13} \rightarrow A_{12} \\
& R_{38}: A_4 \rightarrow 0 \\
& R_{40}: A_{13} + A_2 \rightarrow A_{13} + A_{22} + A_{23} \\
& R_{41}: A_{22} + A_{23} \rightarrow A_2 \\
& R_{42}: A_2 \rightarrow A_{22} + A_{23} \\
& R_{43}: A_{24} + A_{26} \rightarrow A_{23} \\
& R_{44}: A_{23} \rightarrow A_{24} + A_{26} \\
& R_{45}: A_8 \rightarrow A_{25} \\
& R_{46}: A_{25} \rightarrow A_1 + A_{10} \\
& R_{47}: 0 \rightarrow A_{26} \\
& R_{48}: A_{26} \rightarrow 0 \\
& R_{49}: A_4 + A_6 \rightarrow A_7 \\
& R_{50}: A_7 \rightarrow A_4 + A_6 \\
& R_{51}: A_{24} + A_4 \rightarrow A_{27} \\
& R_{52}: A_{27} \rightarrow A_{24} + A_4
\end{align*}
\end{multicols}

\subsection{Feinberg Model}
\label{app:FAL}

The following is the reaction network for the Feinberg model:

\begin{multicols}{3}
\noindent
\begin{align*}
& R_1: 0 \rightarrow A_4 \\
& R_4: A_1 + A_4 \rightarrow A_8 \\
& R_5: A_8 \rightarrow A_1  +  A_4 \\
& R_{12}: A_{10} \rightarrow 0 \\
& R_{14}: A_1 \rightarrow A_2 \\
& R_{15}: A_2 \rightarrow A_1 \\
& R_{18}: A_{12} \rightarrow A_{13} \\
& R_{19}: A_{13} \rightarrow A_{12} \\
& R_{38}: A_4 \rightarrow 0 \\
& R_{43}: A_{24} + A_{26} \rightarrow A_{23} \\
& R_{44}: A_{23} \rightarrow A_{24} + A_{26} \\
& R_{45}: A_8 \rightarrow A_{25} \\
& R_{46}: A_{25} \rightarrow A_1 + A_{10} \\
& R_{47}: 0 \rightarrow A_{26} \\
& R_{48}: A_{26} \rightarrow 0 \\
& R_{49}: A_4 + A_6 \rightarrow A_7 \\
& R_{50}: A_7 \rightarrow A_4 + A_6 \\
& R_{51}: A_{24} + A_4 \rightarrow A_{27} \\
& R_{52}: A_{27} \rightarrow A_{24} + A_4 \\
& R_{53}: A_{13} + A_2 \rightarrow A_{28} \\
& R_{54}: A_2 \rightarrow A_{23} \\
& R_{55}: A_{23} \rightarrow A_2 \\
& R_{56}: A_{28} \rightarrow A_{13} + A_{23}
\end{align*}
\end{multicols}

\subsection{Schmitz Model}
\label{app:Schmitz}

The following is the reaction network for the  Schmitz model:

\begin{multicols}{3}
\noindent
\begin{align*}
& R_1: 0 \rightarrow A_4 \\
& R_2: A_4 \rightarrow A_5 \\
& R_3: A_5 \rightarrow A_4 \\
& R_4: A_1 + A_4 \rightarrow A_8 \\
& R_5: A_8 \rightarrow A_1  +  A_4 \\
& R_6: A_5 + A_3 \rightarrow A_9 \\
& R_7: A_9 \rightarrow A_5  +  A_3 \\
& R_8: A_6 + A_5 \rightarrow A_7 \\
& R_9: A_7 \rightarrow A_6  +  A_5 \\
& R_{10}: A_8 \rightarrow A_1 + A_{10} \\
& R_{11}: A_9 \rightarrow A_3 + A_{11} \\
& R_{12}: A_{10} \rightarrow 0 \\
& R_{13}: A_{11} \rightarrow 0 \\
& R_{14}: A_1 \rightarrow A_2 \\
& R_{15}: A_2 \rightarrow A_1 \\
& R_{16}: A_1 \rightarrow A_3 \\
& R_{17}: A_3 \rightarrow A_1
\end{align*}
\end{multicols}

\subsection{MacLean Model}
\label{app:MacLean}

The following is the reaction network for the MacLean model:

\begin{multicols}{3}
\noindent
\begin{align*}
& R_1: 0 \rightarrow A_4 \\
& R_2: A_4 \rightarrow A_5 \\
& R_3: A_5 \rightarrow A_4 \\
& R_4: A_1 + A_4 \rightarrow A_8 \\
& R_5: A_8 \rightarrow A_1  +  A_4 \\
& R_6: A_5 + A_3 \rightarrow A_9 \\
& R_7: A_9 \rightarrow A_5  +  A_3 \\
& R_8: A_6 + A_5 \rightarrow A_7 \\
& R_9: A_7 \rightarrow A_6  +  A_5 \\
& R_{18}: A_{12} \rightarrow A_{13} \\
& R_{19}: A_{13} \rightarrow A_{12} \\
& R_{20}: A_{13} \rightarrow A_{14} \\
& R_{21}: A_{14} \rightarrow A_{13} \\
& R_{22}: A_2 \rightarrow A_{15} \\
& R_{23}: A_{15} \rightarrow A_2 \\
& R_{24}: A_3  +  A_{14} \rightarrow A_{19} \\
& R_{25}: A_{19} \rightarrow A_3  +  A_{14} \\
& R_{26}: A_{19} \rightarrow A_{14}  +  A_{15} \\
& R_{27}: A_{15}  +  A_{17} \rightarrow A_{21} \\
& R_{28}: A_{21} \rightarrow A_{15}  +  A_{17} \\
& R_{29}: A_{21} \rightarrow A_3  +  A_{17} \\
& R_{30}: A_{13}  +  A_1 \rightarrow A_{18} \\
& R_{31}: A_{18} \rightarrow A_{13}  + A_1 \\
& R_{32}: A_{18} \rightarrow A_{13}  + A_2 \\
& R_{33}: A_2  +  A_{16} \rightarrow A_{20} \\
& R_{34}: A_{20} \rightarrow A_2  +  A_{16} \\
& R_{35}: A_{20} \rightarrow A_1  +  A_{16} \\
& R_{36}: A_8 \rightarrow A_1 \\
& R_{37}: A_9 \rightarrow A_3 \\
& R_{38}: A_4 \rightarrow 0 \\
& R_{39}: A_5 \rightarrow 0
\end{align*}
\end{multicols}

\end{document}